\newtheorem{Thm}{Theorem}
\newtheorem{Cor}{Corollary}
\newtheorem{Prop}{Proposition}
\newtheorem{Claim}{Claim}
\newtheorem{Def}{Definition}
\newenvironment{proof}{\noindent {\textbf{Proof }}}{$\Box$ \medskip}
\newcommand\mbR{\mbox{$\mathbb{R}$}}
\newcommand {\ie} {\textit{i.e.}\xspace}
\newcommand {\st} {\textit{s.t.}\xspace}
\newcommand\pr{\mbox{\bf Pr}}
\newcommand\av{\mbox{\bf{\bf E}}}
\newcommand\COMMENT[1]{}
\begin{document}
\title{A quantum protocol for sampling correlated equilibria unconditionally and without a mediator}
\author{Iordanis Kerenidis \thanks{Laboratoire d'Informatique Algorithmique: Fondements et Applications, Univ Paris Diderot 7, and CNRS; Centre for Quantum Technologies, Singapore. Email:   {\tt jkeren@liafa.jussieu.fr}} \quad and \quad Shengyu Zhang\thanks{Department of Computer Science and Engineering and The Institute of Theoretical Computer Science and Communications, The Chinese University of Hong Kong. Email: {\tt syzhang@cse.cuhk.edu.hk}}}
\maketitle

\begin{abstract}
A correlated equilibrium is a fundamental solution concept in game theory that enjoys many desirable mathematical and algorithmic properties: it can achieve more fair and higher payoffs than a Nash equilibrium and it can be efficiently computed for a vast class of games. However, it requires a trusted mediator to assist the players in sampling their moves, which is a major drawback in many practical applications. 

A computational solution to this problem was proposed by Dodis, Halevi and Rabin \cite{DHR00}. They extended the original game by adding a preamble stage, where the players communicate with each other and then they perform the original game. For this extended game, they show that any correlated strategy for 2-player games can be achieved, provided that the players are computationally bounded and can communicate before the game. 

The introduction of cryptography with computational security in game theory is of great interest both from a theoretical and more importantly from a practical point of view. However, the main game-theoretic question remained open: can we achieve any correlated equilibrium for 2-player games without a trusted mediator and also unconditionally? 
 
In this paper, we provide a positive answer to this question. We show that if the players can communicate via a {\em quantum} channel before the game, then any correlated equilibrium for 2-player games can be achieved, without a trusted mediator and unconditionally.  This provides another example of a major advantage of quantum information processing: quantum communication enables players to achieve a real correlated Nash equilibrium unconditionally, a task which is impossible in the classical world. 

More precisely, we prove that for any correlated equilibrium $p$ of a strategic game $G$, there exists an extended game (with a quantum communication initial stage) $Q$ with an efficiently computable approximate Nash equilibrium $\sigma$, such that the expected payoff for both players in $\sigma$ is at least as high as in $p$.

The main cryptographic tool used in the construction is the quantum weak coin flipping protocol of Mochon \cite{Moc07}.

\end{abstract}
\thispagestyle{empty}
\clearpage
\setcounter{page}{1}
\section{Introduction}

Game theory is a research area of great importance that studies the behavior of two or more players, when interacting with each other in order to achieve individual goals. It has found far reaching applications in the fields of economics, biology, computer science, sociology, political sciences, the study of Internet and stock markets, among others.

Most games fall into two broad categories: 1) The strategic games, where all players choose their strategies simultaneously or without knowing the other players' moves. The payoffs depend on the joint strategy that is performed by the players, and the game is usually described in a matrix form {when there are only two players}; 2) The extensive games, where the players take turns in making moves. The game is described as a tree, where each node represents a stage of the game, the edges represent the possible moves and the payoffs are specified at the leaves of the tree. The players may or may not know the previous moves of the other players ({perfect} or {imperfect} information games). A strategic game is a special case of an extensive game with {imperfect} information. 

Examples of strategic games include the Battle of the Sexes, Prisoner's Dilemma, Vickrey auction, Internet routing, job scheduling, etc. Examples of extensive games include chess, the eBay auction system, evolutionary games, wars, etc. 

In order to study the optimal behavior of players in such games, the concept of an equilibrium has been put forward \cite{vNM44}. A Nash equilibrium, the most fundamental notion of an equilibrium, is a joint strategy of all players, such that no player has any incentive to change her own strategy given that all other players retain theirs. One of the seminal results in this area is that every game has a mixed Nash equilibrium \cite{vNM44,Nas51}, \ie one where the strategy of each player is a distribution over deterministic strategies. Note that these distributions are uncorrelated across different players and hence, each  player can sample independently her strategy. 

Even though the importance of Nash equilibria is unrefuted, there are some drawbacks. First, the recent breakthrough results by \cite{DGP09,CDT09} have shown that finding a Nash equilibrium is a computationally hard problem, namely it is PPAD-complete and hence it is not clear how in real life the players can decide to play according to a Nash equilibrium, when they cannot even find one in the first place. The (im)possibility of finding an efficient classical or quantum algorithm for computing Nash equilibria is one of the main open questions in the area. To make matters worse, in many games there are more than one Nash equilibrium and it is really unclear whether the players will end up in one of them, and if yes, which one and how. Note that in many cases these equilibria are not fair, and thus different players have a preference for a different equilibrium.

Let us see a simple example, the Battle of the Sexes, to illustrate the above points. A couple needs to decide where to go for holidays. Partner A prefers Amsterdam to Barcelona, and Partner B prefers Barcelona to Amsterdam. But both players prefer going to the same place than ending up in different places; see the following payoff Table, {where the pair of numbers in each entry represents the payoffs of the two partners in order}. 
\begin{center}
	\begin{tabular}{|r|c|c|}
		\hline 
		  & Amsterdam & Barcelona \\
		\hline 
		Amsterdam & (4,2) & (0,0) \\ 
		\hline
		Barcelona\hspace{.3em} & (0,0) & (2,4)	\\
		\hline	
	\end{tabular}
\end{center}
So where should they go? There are two pure Nash equilibria in the above game. They both go to Amsterdam, and hence have payoffs 4 and 2 respectively, or both go to Barcelona and have payoffs 2 and 4 respectively. Even though these are Nash equilibria, none of them is fair, causing the \emph{battle} of the sexes. 
There is actually a third Nash equilibrium, a mixed one, where each player independently flips a coin and decides to go to their preferred place with probability $2/3$ and to the preferred place of the other player with probability $1/3$. In this case, the expected payoff is the same for both players and equal to $4/3$. Even though this is a fair equilibrium, it is pretty inefficient, since now both players have payoff even less than in the case of the unfair pure equilibrium. 
Moreover, there is a $5/9$ chance the couple goes in different places, which they really do not prefer. 

One simple way to rectify all of these problems is the introduction of the notion of a correlated (Nash) equilibrium \cite{Aum74}. In such an equilibrium, we allow the strategies of the players to be drawn from a correlated distribution $p$, and same as for a Nash equilibrium, we require that each player has no incentive to deviate given the current sample of his strategy and the information of the distribution $p$ (but not the sampled strategies of the other players). There are many nice properties of these equilibria. First, they {form} a superset of Nash equilibria and hence they always exist. Moreover, {it is not hard to exhibit} games with a correlated equilibrium which enjoys fairness and whose social welfare (\ie the total payoff of the players) is arbitrarily better than that of any Nash equilibrium. Second, unlike Nash equilibria, it is computationally easy to compute an optimal correlated equilibrium by solving an LP, for many types of games, including constant-player, polymatrix, graphical, hypergraphical, congestion, local effect, scheduling, facility location, network design and symmetric games {\cite{PR08, VNRT07}}. In our previous example, a correlated equilibrium is the strategy where with probability $1/2$ the couple goes to Amsterdam or to Barcelona. The expected payoff for each player is then $3$ and the couple is equally happy. 

So, why is the notion of correlated equilibrium not the solution to all our problems? Because in general it is hard to sample from a correlated distribution. 
In fact, even for the case of two players and the distribution of the correlated equilibrium being just one fair coin, it is well known that without any computational assumptions, it is impossible to achieve just that; actually in any classical protocol one player has a strategy to get his/her desired outcome with probability 1. 
A canonical solution to this problem is to introduce a trusted mediator, who in this case flips the fair coin. However, for many real life scenarios, trusted mediators are simply not available. 

A computational solution to this problem was proposed by Dodis, Halevi and Rabin \cite{DHR00}, who showed that classical cryptographic protocols can provide an elegant way to achieve a correlated equilibrium under standard computational hardness assumptions. 
More specifically, for any strategic game where the correlated equilibrium can be efficiently computed, they do the following: Before playing the game, the players communicate in order to sample a joint strategy from the equilibrium distribution, in such a way that each player at the end of the protocol only knows her strategy and has no information about the other players' moves apart from the fact that they come from the equilibrium distribution. The privacy and correctness of this procedure {are} guaranteed by the fact that the players are computationally bounded and the assumption that a primitive, equivalent to Oblivious Transfer,  exists. Then, the players play the original game. Since they have no information about the other players' strategies and the joint strategy is sampled from a correlated equilibrium of the original game, they have no incentive to deviate. In other words, being honest during the communication phase and playing the move that resulted from the communication phase is a Nash equilibrium of the new extended game that achieves payoff equal to the correlated equilibrium of the original game. The introduction of cryptography in game theory is a very promising idea that nevertheless needs to be used with caution due to the many nuances in the two models. Note, last, that the use of cryptography by Dodis \emph{et al}. provides a solution only when one is willing to accept the notion of computational equilibria, which are very different than the equilibria used by game theorists. {Since then, a series of works have studied the relation between cryptography and game theory \cite{FS02, LMPS04, ILM05, ADGH06}.}

In our paper we show that, in fact, one need not resort to computational equilibria, if we allow the players to communicate via a quantum channel instead of a classical one. This provides another example of a major advantage of quantum information processing: quantum communication enables players to achieve a real correlated Nash equilibrium. Note that we only make the communication before the game quantum but the game itself remains a classical one. 

\emph{A priori}, it is not clear that quantum communication can provide any significant advantage, since we know that Oblivious Transfer, the primitive that Dodis, Halevi and Rabin need for their construction, is impossible even in the quantum world {\cite{Lo97}}. We overcome this problem by providing a new way to extend any game with an efficient correlated equilibrium into a new game that has an efficient Nash equilibrium achieving equal {or even better} payoffs {(up to an arbitrarily small $\varepsilon$)}. {The construction} is based on the existence of a weaker primitive, called Weak Coin Flipping. This primitive is impossible classically without any computational assumptions. In the quantum world, however, 
Mochon \cite{Moc07} has showed in a powerful result that there exists a quantum coin flipping protocol, where player A prefers Head and player B prefers Tail (which is exactly the case in the Battle of the Sexes), such that if one player plays the honest strategy, then no matter how the other player plays, the bias of the coin {cannot exceed} an arbitrarily small $\varepsilon$. 

{At a} high level, the new game we construct has the following three stages: 1) \emph{Communication stage}: 
the players use as a subroutine the quantum weak coin flipping protocol in order to sample a suggested joint strategy from the distribution of the original correlated equilibrium. Note that we do not preserve privacy of the moves, \ie at the end of this procedure both players know the joint strategy. 2) \emph{Game stage}: the players play the original game. Note of course that since each player knows the suggested strategy for the other player, it may be to her advantage to change her strategy instead of following the suggestion. We remedy this situation by using the usual ``Punishment for Deviation" method in the final stage. 3) \emph{Checking stage}: the players submit an Accept/Reject move, where a player plays Reject if the strategy of the other player during the second stage is not equal to the suggested one. The payoff of the players is equal to the one in the original game if they both play Accept in the last phase, and 0 otherwise. Note that we do not need the Accept/Reject moves to be simultaneous.

It is not hard to see that being honest during the communication stage and playing the suggested move is an approximate Nash equilibrium for this game and it achieves payoff equal to the correlated equilibrium of the original game. Let us assume that one of the players is dishonest while the other is playing the honest strategy. {The cheating player} can deviate during the coin flipping process but this will only increase his payoff by at most an $\varepsilon$ fraction by the security of the coin flipping protocol. Then, in the second stage, he can deviate by not playing the suggested strategy, but then his payoff will be 0 since the honest player will play Reject in the Checking stage. Hence, there is no significant advantage for any player to deviate from the honest strategy.

\begin{Thm}
For any correlated equilibrium $p$ of a game $G$, there exists an extended game $Q$ with a Nash equilibrium $\sigma$, such that the expected payoff for both players in $\sigma$ is at least as high as in $p$.
\end{Thm}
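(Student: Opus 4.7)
The proof plan follows the three-stage construction sketched in the introduction, with the main work being (i) specifying a sampling subroutine built from Mochon's weak coin flipping protocol, and (ii) verifying that the natural honest strategy $\sigma$ is a Nash equilibrium up to an arbitrarily small $\varepsilon$.

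First I would define the extended game $Q$ precisely. Let $A$ and $B$ be the action sets of \alice and \bob in $G$ with payoffs $u_A, u_B$, and let $p$ be the correlated equilibrium distribution on $A\times B$. The game $Q$ proceeds in three stages: a \emph{Communication stage} in which the players execute a quantum protocol $\Pi_p$ whose honest output is a pair $(i,j)\sim p$ known to both; a \emph{Game stage} in which each player announces an action in $G$; and a \emph{Checking stage} in which each player submits Accept or Reject. The final payoffs are $u_A, u_B$ on the announced pair if both players Accept, and $0$ otherwise. The honest strategy $\sigma$ is: run $\Pi_p$ honestly, play the suggested action, and Accept iff the other player announced their suggested action. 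If both players follow $\sigma$, the outcome is distributed as $p$, so both players receive exactly their expected payoffs under $p$.

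To show that $\sigma$ is an approximate Nash equilibrium, I would consider a unilateral deviation by \alice while \bob follows $\sigma$. Any deviation in the Game stage yields a Reject from \bob and payoff $0$ for \alice, so I may assume \alice plays the suggested action in Stage 2 and deviates only in Stage 1. The correlated equilibrium property says that conditioned on a suggestion $i$, the distribution of $j$ under $p$ makes playing $i$ a best response; more importantly, \alice can only gain by shifting the joint distribution of $(i,j)$ away from $p$ in a direction that favors her. This is precisely what $\Pi_p$ must prevent up to $\varepsilon$.

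The main obstacle is thus constructing $\Pi_p$ with the right cheating guarantee from Mochon's primitive. A single weak coin flip produces a binary outcome with one party preferring $0$ and the other $1$, and bounds each party's cheating bias by $\varepsilon$. To sample the general distribution $p$ I would build $\Pi_p$ recursively as a tree of weak coin flips: at each internal node partition the remaining support of $p$ into two subsets $S_0, S_1$ in an order compatible with the players' payoff preferences, so that an appropriately biased weak coin flipping protocol reproducing $\Pr_p[S_0]$ vs.\ $\Pr_p[S_1]$ can be invoked with the ``preferred'' sides assigned consistently, and then recurse on the chosen subset. The key lemmas needed are that an arbitrarily biased weak coin can be built from Mochon's primitive while preserving the $\varepsilon$-security, and that the cheating advantages accumulated across the tree sum to $O(\varepsilon)$. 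Combining this bound with the correlated equilibrium inequalities and the boundedness of the payoffs in $G$ caps \alice's deviation gain at $O(\varepsilon)$, giving an $\varepsilon$-Nash equilibrium $\sigma$ whose expected payoffs match those under $p$, as required.
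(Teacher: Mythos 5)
Your overall architecture matches the paper's: the same three-stage extended game, the same honest strategy $\sigma$, and the same two-pronged deviation analysis (a Game-stage deviation is punished with payoff $0$ via the Checking stage; a Communication-stage deviation can shift the output distribution only slightly). One small correction to your reasoning there: since the sampling subroutine reveals the \emph{entire} joint strategy to both players, the conditional best-response property of the correlated equilibrium does essentially no work in showing $\sigma$ is an $\varepsilon$-equilibrium --- compliance in the Game stage is enforced entirely by the Checking stage, and the Communication-stage bound comes entirely from the sampling protocol's security. The role of $p$ being a correlated equilibrium is only to make the target distribution a meaningful solution concept of $G$.

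The genuine gap is in the sampling subroutine, which is the heart of the proof. You build a tree of \emph{biased} weak coin flips and defer to a ``key lemma'' that an arbitrarily biased weak coin with preserved $\varepsilon$-security can be built from Mochon's primitive; but Mochon's result only gives a \emph{fair} weak coin flip, and constructing a biased one with the right one-sided security guarantees is not an off-the-shelf fact --- it is essentially the problem to be solved. The paper sidesteps this entirely: it first emulates $p$ (up to error $\delta$) by the uniform distribution on a multiset of $2^k$ joint strategies with $k=O(\log n)$, so that sampling reduces to generating $k$ \emph{fair} bits; for each bit the players announce which value they prefer (by comparing conditional expected utilities), agree immediately if the preferences coincide, and otherwise run a fair $WCF(a,\varepsilon/(2k))$. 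Your second deferred claim --- that the cheating advantages ``sum to $O(\varepsilon)$'' --- is also where the real work lies: the paper proves by induction that $\|q^m-p_h^m\|_1\le m\varepsilon/k$ after $m$ rounds, and must additionally invoke the fact that Mochon's protocol remains secure against an adversary holding quantum auxiliary input, since a cheater could try to entangle the $k$ executions. Without the uniform-multiset reduction and this $L_1$ induction (or a worked-out substitute for your biased-WCF lemma together with a composability argument), the proposal does not yet constitute a proof.
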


Let us make a more detailed comparison with the results of Dodis, Halevi and Rabin \cite{DHR00}. They describe an extended game, {first introduced by Barany \cite{Bar92}}, that involves a communication stage and then the game stage. In the communication {stage}, they securely compute a functionality that they call Correlated Element Selection. This consists of two players sampling a joint strategy from a correlated distribution, with the extra privacy property that at the end each player knows only his/her own move. Their construction is based on a Blindable Encryption scheme, which is as strong as Oblivious Transfer, meaning that it is complete for all secure multiparty computation. Then, {in the second stage,} the players play the original game. 

In our protocol, the communication stage achieves something weaker. We just sample from the correlated distribution in a way that at the end, both players know the joint strategy. By removing the privacy constraint we are able to achieve the sampling using the weaker primitive of Weak Coin Flipping. In both protocols we need to dissuade the players from cheating. In the DHR protocol, if a player catches the other one cheating during the communication stage, then he plays his minmax move in the second stage, hence minimizing the other players payoff. Moreover, no one has an incentive to play a different move than the suggested one (except with exponentially small probability), hence they do not have to worry about forcing the players to play the suggested move.  We are more explicit in our punishment by adding the Accept/Reject stage, both for dissuading the players from cheating during the communication stage, but more importantly in order to dissuade the players from playing a different move from the suggested one. Last, both protocols achieve an approximate equilibrium, since the communication part is not perfect.

On the other hand, we achieve something much stronger than before, which is that we do not make any assumptions about the computational power of the players. Hence we are able to use quantum communication to achieve a real correlated equilibrium for a large array of different types of games with unconditionally powerful players and without a trusted mediator. 

A few remarks are in order for this extra checking stage that we add to the original game.
First, note that all the equilibria remain unchanged, since we specified the payoffs of any joint strategy with a Reject move as 0. Hence, sampling a correlated equilibrium in the new game is equivalent to sampling a correlated equilibrium in the original game. This means that the quantum advantage comes from the sampling part and not due to the checking part. For a fair comparison, we can also augment the classical game with the choice of Accept/Reject. It is not hard to see that the players still cannot sample a correlated equilibrium in this new game. In particular, consider the bimatrix game, in which the only fair correlated equilibrium with total payoff 1 is to choose $(0,0)$ and $(1,1)$ each with half probability. This is a (fair) coin flip, which is known to be impossible to sample. Even if we extend the game with the Accept/Reject stage, the only fair correlated equilibrium with total payoff 1 is to choose $(0,\text{Accept};0,\text{Accept})$ and $(1,\text{Accept};1,\text{Accept})$ each with half probability. But this results in flipping a coin again. Note that we use the normal-form for convenience even though the Accept/Reject stage is not simultaneous.
\begin{center}
	\begin{tabular}{|r|c|c|}
		\hline 
		  & 0 & 1 \\
		\hline 
		0 & (1,0) & (0,0) \\ 
		\hline
		1 & (0,0) & (0,1)	\\
		\hline	
	\end{tabular}$\quad\quad$
\begin{tabular}{|r|c|c|c|c|}
		\hline 
		  & (0,Accept) & (1,Accept) & (0,Reject) & (1,Reject) \\
		\hline 
		(0,Accept) & (1,0) & (0,0) & (0,0) & (0,0) \\ 
		\hline
		(1,Accept) & (0,0) & (0,1) & (0,0) & (0,0) \\
		\hline	
		(0,Reject) & (0,0) & (0,0) & (0,0) & (0,0) \\ 
		\hline
		(1,Reject) & (0,0) & (0,0) & (0,0) & (0,0) \\
		\hline	
	\end{tabular}
\end{center}

{Second, in many cases, the checking stage is actually not necessary. For example, for a large class of games, an optimal correlated equilibrium has support only on a set of pure Nash equilibria.  
In these correlated equilibria, knowing the other players' sampled strategies $s_{-i}$ does not give any incentive for Player $i$ to deviate from her own sampled strategy $s_i$, since $s$ is already a pure Nash equilibrium. Note that our previous example of Battle of the Sexes falls into this family of games.
}

Third, in many practical situations, breaking preagreed rules is considered losing (and thus given the least payoff) automatically. Many games in sports are of this nature. For example, when the referee tosses a coin to decide the side of the court for each team, both teams know the outcome of this random process and are not allowed to disagree no matter the outcome; otherwise the team will be claimed to lose by the referee immediately. Moreover, in extensive games, the checking phase is already implicitly present. In the middle of a chess game, only a subset of moves is compatible with the stage of the game and hence if a player decides to play some other move, then the other player will Reject either immediately or at the end of the game. Hence, adding an Accept/Reject stage only makes explicit what is implicitly present in any game, that if a player breaks the rules then the other one rejects the outcome of the game.  

Fourth, our Accept/Reject stage is not simultaneous. One has to be very careful with adding simultaneous moves to a game, since two players can flip a fair coin with a simultaneous move where each plays one of two possible moves at random. If the two moves are the same then the coin is Head and if different the coin is Tail. Here, we do not add the ability to play simultaneously. 

Going back to our Battle of the Sexes game, from the properties of the quantum coin flip, it is not hard to see that the strategy of both players being honest during the coin flip is a $2\varepsilon$-approximate Nash equilibrium with payoff $3$ for both players (equal to the original correlated Nash equilibrium); if one player decides to play any other strategy while the other one remains honest, then {the cheating player's} payoff will be {no more than} than $(1/2+\varepsilon)\cdot 4 + (1/2-\varepsilon) \cdot 2 =  3+ 2\varepsilon$ (In later sections, we normalize the game by scaling all utilities to be within $[0,1]$ for a fair comparison). As we have said, one can easily generalize this and other games so that the correlated Nash equilibria be made arbitrarily better and more fair than all the Nash equilibria, which are the only ones that can be sampled classically without a trusted mediator. 

Note that our protocol does not provide a quantum algorithm to compute a Nash equilibrium. However, it almost renders this question moot. Instead of a quantum algorithm to compute a Nash equilibrium, there is a quantum protocol where the players can generate a correlated equilibrium, {which enjoys desirable properties such as fairness and higher payoff}. 

{Since our protocol uses quantum channels, one may wonder whether the power of two-way quantum communication enables us to achieve any \emph{quantum} equilibrium with payoff higher than any classical correlated equilibrium. This is actually not possible: Any quantum protocol eventually generates a joint strategy $s$ according to some correlated distribution $p$. If the players' behaviors in the protocol form a Nash equilibrium (in the sense that no player has any incentive to use other sequence of quantum operations), then the resulting distribution $p$ is a quantum correlated equilibrium of the quantized game, because otherwise the players would like to change their behaviors in the last step. By an observation in \cite{Zha10}, $p$ is also a (classical) correlated equilibrium of the original (classical) game, which the present paper already gives a way to generate.}

\section{Preliminaries}

\subsection{Game Theory}
In a classical strategic game {with $n$ players, labeled by $\{1,2,\ldots,n\}$}, each player $i$ has a set $S_i$ of strategies. 
We use $s=(s_1,\ldots, s_n)$ to denote the \emph{joint strategy} selected by the players and $S= S_1 \times \ldots \times S_n$ to denote the set of all possible joint strategies. Each player $i$ has a utility function $u_i: S \rightarrow \mbR$, specifying the \emph{payoff} or \emph{utility} $u_i(s)$ to player $i$ on the joint strategy $s$. For simplicity of notation, we use subscript $-i$ to denote the set $[n]-\{i\}$, so $s_{-i}$ is $(s_1, \ldots, s_{i-1}, s_{i+1}, \ldots, s_n)$, and similarly for $S_{-i}$, $p_{-i}$, etc. 

In a classical extensive game with perfect information, the players take moves in turns and all players know the entire history of all players' moves. An extensive game can be transformed into strategic form by tabulating all deterministic strategies of the players, which usually results in an exponential increase in size.

A game is \emph{$[0,1]$-normalized}, or simply \emph{normalized}, if all utility functions are in $[0,1]$. Any game can be scaled to a normalized one. For a fair comparison, we assume that all games in this paper are normalized.

A Nash equilibrium is a fundamental solution concept in game theory. Roughly, it says that in a joint strategy, no player can gain more by changing her strategy, provided that all other players keep their current strategies unchanged. 
\begin{Def}
A \emph{pure Nash equilibrium} is a joint strategy $s = (s_i, \ldots ,s_n) \in S$ satisfying 
\begin{align*}
	u_i(s_i,s_{-i}) \geq  u_i(s_i',s_{-i}), \qquad \forall i\in [n], \forall s'_i\in S_i.
\end{align*}
\end{Def}
Pure Nash equilibria can be generalized by allowing each player to independently select her strategy according to some distribution, leading to the following concept of \emph{mixed Nash equilibrium}. 
\begin{Def}
A \emph{(mixed) Nash equilibrium (NE)} is a product probability distribution $p = p_1 \times \ldots \times p_n$, where each $p_i$ is a probability distributions over $S_i$, satisfying 
\begin{align*}
	\sum_{s_{-i}} p_{-i}(s_{-i}) u_i(s_i,s_{-i}) \geq  \sum_{s_{-i}} p_{-i}(s_{-i}) u_i(s_i',s_{-i}), \quad \forall i\in [n], \ \forall s_i, s'_i\in S_i \text{ with } p_i(s_i)>0.
\end{align*}
\end{Def}
A Correlated equilibrium assumes an external party to draw a set of strategies for the players according to a probability distribution, possibly correlated in an arbitrary way, over $S$, and suggest them to each player. If player $i$ receives a suggested strategy $s_i$, the player can never increase its expected utility by switching to another strategy $s_i' \in S_i$, assuming that all other players are all going to choose their received suggestion $s'$. 

\begin{Def} \label{thm:CE}
A \emph{correlated Nash equilibrium (CE)} is a probability distribution $p$ over $S$ satisfying
\begin{align*}
	\sum_{s_{-i}} p(s_i,s_{-i}) u_i(s_i,s_{-i}) \geq  \sum_{s_{-i}} p(s_i,s_{-i}) u_i(s_i',s_{-i}), \qquad \forall i\in [n], \ \forall s_i, s'_i\in S_i.
\end{align*}
\end{Def}

{We will also need an approximate version of equilibrium, which basically says that no Player $i$ can gain much by changing the suggested strategy $s_i$. Depending on whether we require the limit of the gain for each possible $s_i$ in the support of $p$ or on average of $p$, one can define worst-case and average-case approximate equilibrium. It turns out that the average-case one, as defined below, has many nice properties, such as being the limit of a natural dynamics of minimum regrets (\cite{VNRT07}, Chapter 4) {and hence it is the one we will use}.
\begin{Def}\label{def: approx CE}
An \emph{$\varepsilon$-correlated equilibrium} is a probability distribution $p$ over $S$ satisfying
\begin{align*}
	\av_{s\leftarrow p}[u_i(s_i'(s_i),s_{-i})] \leq \av_{s\leftarrow p}[u_i(s_i,s_{-i})] + \varepsilon,
\end{align*}
for any $i$ and any function $s_i': S_i \rightarrow S_i$. An $\varepsilon$-correlated equilibrium $p$ is an \emph{$\varepsilon$-Nash equilibrium} if it is a product distribution $p = p_1\times \cdots \times p_n$. 
\end{Def}


We can also define equilibria for extensive games by defining the corresponding equilibria on their strategic form.

\subsection{Cryptography}

We provide the formal definition of a weak coin flipping protocol.

\begin{Def}
A {\em weak coin flipping} protocol between two parties Alice and Bob is a protocol where Alice and Bob interact and at the end, Alice outputs a value $c_A \in \{0,1\}$ and Bob outputs a value $c_B \in \{0,1\}$. If $c_A = c_B$, we say that the protocol outputs $c = c_A$. If $c_A \neq c_B$ then the protocol outputs $c = \bot$. 

{An $(a,\varepsilon)$-}weak coin flipping protocol  ($WCF(a,\varepsilon)$) has the following properties:
\begin{itemize}
\item If $c = a$, we say that Alice wins. If $c = 1-a$, we say that Bob wins.
\item If Alice and Bob are honest then $\pr[\mbox{Alice wins}] = \pr[ \mbox{Bob wins}] = 1/2$
\item If Alice cheats and Bob is honest then $P^*_A = \pr[ \mbox{Alice wins}] \le 1/2 + \varepsilon$
\item If Bob cheats and Alice is honest then $P^*_B = \pr[ \mbox{Bob wins}]\le 1/2 + \varepsilon$
\end{itemize}
 
$P^*_A$ and $P^*_B$ are the cheating probabilities of Alice and Bob. The cheating probability of the protocol is defined as $\max\{P^*_A,P^*_B\}$. 
\end{Def}

Note that in the definition the players do not abort, since a player that wants to abort can always declare victory rather than aborting without reducing the security of the protocol.

We will use the following result by Mochon.
\begin{Prop}{\em\cite{Moc07}}\label{Mochon}
For every $\varepsilon > 0$ and $a \in \{0,1\}$, there exists a quantum $WCF(a,\varepsilon)$ protocol $P$. 
\end{Prop}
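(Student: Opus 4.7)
The plan is to follow Kitaev's formalism of point games, which reduces the analysis of any two-party quantum coin-flipping protocol to a purely combinatorial/analytic problem. Concretely, for a protocol in which the players exchange quantum messages and at the end declare a value in $\{0,1\}$, one associates to each round a set of pairs $(p_0,p_1)$ where $p_c$ is (the square of) the optimal cheating amplitude for forcing the outcome to be $c$ against the honest strategy of the other party. Kitaev showed that this data can be encoded as a distribution over points in $[0,\infty)^2$ and that the entire protocol corresponds to a finite sequence of \emph{valid moves} on such distributions, where a valid move takes a convex combination of points and replaces it by another convex combination provided that certain operator-monotone inequalities are preserved. The honest behaviour gives rise to the initial distribution, and the final distribution determines $P^*_A$ and $P^*_B$.

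First I would set up this correspondence precisely: define the \emph{time-dependent point game} (TDPG), specify its \emph{horizontal} and \emph{vertical} moves (which correspond respectively to cheating by \alice and by \bob), and prove Kitaev's theorem that from any TDPG achieving certain final cheating values one can synthesize a quantum protocol with matching parameters. This synthesis step is delicate: one has to translate each elementary move into a unitary on a small ancilla and show that honest and cheating semidefinite programs behave as the point game predicts. I would rely on Kitaev's SDP duality: the primal SDP computes $P^*_A$ and $P^*_B$, and its dual produces the operator-monotone witnesses whose existence is exactly what a valid point-game move encodes.

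Next I would construct, for every $\varepsilon>0$ and every $a\in\{0,1\}$, an explicit TDPG starting from the ``balanced'' honest configuration and ending at a configuration showing $P^*_A,P^*_B\le 1/2+\varepsilon$. Mochon's construction uses the fact that for any operator-monotone function $f$ on $(0,\infty)$ one obtains an admissible move that splits and merges points according to $f$; by concatenating moves built from carefully chosen rational (Blaschke-type) functions one can transport probability mass from the balanced initial points toward the axes in a controlled way. The construction is iterative: one produces a ``ladder'' of intermediate point configurations in which each step reduces the extractable cheating bias by a factor depending on the depth of the ladder, so that taking the ladder deep enough drives the bias below $\varepsilon$. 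A key ingredient is a catalyst-style trick in which auxiliary points are introduced, used to permit moves that would otherwise be forbidden, and then reabsorbed at the end.

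The main obstacle, and what makes Mochon's theorem deep rather than routine, is precisely this explicit construction of TDPGs with vanishing bias, together with the proof that each elementary step is a valid operator-monotone move. Bounding both $P^*_A$ and $P^*_B$ simultaneously is the hard part: pushing one player's cheating probability down tends to inflate the other's, and the construction must be finely balanced. I would therefore expect to spend the bulk of the effort on (i) verifying operator monotonicity for the family of functions used in each elementary move, and (ii) inductively controlling the positions of the points after many stages so that the final distribution is supported arbitrarily close to $\{(1,0),(0,1)\}$ with total weight $1$ on the ``Alice wins if $a=0$'' point at most $1/2+\varepsilon$, and symmetrically for \bob. Once the TDPG is in hand, Kitaev's synthesis step converts it into the desired $WCF(a,\varepsilon)$ protocol, completing the proof.
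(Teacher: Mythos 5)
The paper does not prove this proposition at all: it is imported verbatim from Mochon \cite{Moc07} and used as a black box, so there is no ``paper's own proof'' to compare against. Your outline does correctly identify the architecture of the known proof in the literature --- Kitaev's SDP/point-game formalism, time-dependent point games with horizontal and vertical moves certified by operator-monotone functions, Mochon's ladder of configurations driving the bias to $0$, and the catalyst trick. In that sense you are describing the right route.

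However, what you have written is a proof plan, not a proof, and the plan defers exactly the parts that constitute the theorem's entire difficulty. Three items in particular are named but not executed. (i) The synthesis direction of Kitaev's correspondence --- from a valid TDPG back to an actual quantum protocol whose cheating probabilities match the final point configuration --- is itself a hard theorem; it is not a routine translation of ``each elementary move into a unitary on a small ancilla,'' and making it rigorous (including handling the limit of infinitely many rounds and approximate moves) required the subsequent work of Aharonov, Chailloux, Ganz, Kerenidis and Magnin. Without it the point game gives only a lower-bound heuristic, not a protocol. (ii) You assert that concatenating moves built from ``carefully chosen rational (Blaschke-type) functions'' transports mass toward the axes ``in a controlled way,'' but you give neither the family of functions nor the verification that each resulting move is valid (i.e., that the associated function is operator monotone on $(0,\infty)$ with the required boundary behaviour); this verification is the combinatorial core of Mochon's argument. (iii) The quantitative bookkeeping --- that a ladder of depth $N$ yields bias $O(1/N)$ (or any explicit $o(1)$) simultaneously for both players, and that the catalyst points can be reabsorbed without spoiling the bound --- is exactly the ``finely balanced'' part you acknowledge and then skip. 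As it stands the proposal establishes nothing beyond what the citation to \cite{Moc07} already provides; to count as a proof it would need the explicit move family, the operator-monotonicity proofs, the inductive control of the configurations, and a complete TDPG-to-protocol compilation theorem.
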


Note that this is a weaker definition of a usual coin flip, since here, we assign a winning value for each player. Even though each player cannot bias the coin towards this winning value, he or she can bias the coin towards the losing value with probability 1. Weak coin flipping is possible using quantum communication, though for the strong coin flipping the optimal cheating probability for any protocol is $1/\sqrt{2}$ {\cite{Kit03,CK09}}.

In the following section we will use weak coin flipping as a subroutine for the following cryptographic primitive, that enables two players to jointly sample from a correlated distribution, in a way that no dishonest player can force a distribution which is far from the honest one.

\begin{Def}
A {\em Correlated Strategy Sampling} protocol between two players $P_1$ and $P_2$ is an interactive protocol where the players receive as input a game $G$ with an efficiently computable correlated equilibrium $p$ and at the end, $P_1$ outputs a joint strategy $(s_1,s_2) \in S_1 \times S_2$ and $P_2$ outputs a joint strategy $(s_1',s_2') \in S_1 \times S_2$. If $(s_1,s_2) = (s_1',s_2')$, we say the protocol outputs $s=(s_1,s_2)$. If $(s_1,s_2) \neq (s_1',s_2')$ then we say the protocol outputs $s = \bot$. 

An $(\varepsilon,\delta)$-Correlated Strategy Sampling procedure satisfies the following properties:
\begin{enumerate}
\item If both players follow the honest strategy, then they both output the same joint strategy $s=(s_1,s_2)$, where $(s_1,s_2) \leftarrow p_h$ {for some distribution $p_h$}, \st
\[ \mbox{for {both} } i \in \{1,2\}, \quad \av_{(s_1,s_2)\leftarrow p_h} [u_i(s_1,s_2)] \geq \av_{(s_1,s_2)\leftarrow p} [u_i(s_1,s_2)] {- \delta}
\]
\item If Player $1$ is dishonest and Player 2 is honest (similarly for the other case), then Player $2$ outputs a joint strategy $(s_1,s_2)$ {distributed according to some $q$}, \st 
	\[\av_{(s_1,s_2)\leftarrow q} [u_2(s_1,s_2)] \geq \av_{(s_1,s_2)\leftarrow p_h} [u_2(s_1,s_2)] - \varepsilon\] 
\[\av_{(s_1,s_2)\leftarrow q} [u_1(s_1,s_2)] \leq \av_{(s_1,s_2)\leftarrow p_h} [u_1(s_1,s_2)] + \varepsilon\] 
\end{enumerate}
\end{Def}

Note again, that similar to the case of the weak coin flip, the players do not abort, since a player that wants to abort can always choose the joint strategy that is best for him rather than aborting without reducing the security of the protocol.


\section{The extended game}


For simplicity, we consider a two-player strategic game $G$ of size $n$, but our results easily extend to more players. We describe how to derive an extended game $Q$ from any such $G$.

Similar to the DHR extended game, we assume that the players can communicate with each other before they start playing the game, but now via a quantum channel. In this preamble stage they perform a quantum protocol that we call Correlated Strategy Sampling. 

In the following section we show how to implement this procedure unconditionally, using a Weak Coin Flipping subroutine with bias $\varepsilon'{=O(\varepsilon/\log n)}$.

Then, we extend the original game G to a 2-stage game, where the first stage is identical to the game $G$ and for the second stage, which we call the Checking stage, the available moves for each player are Accept or Reject. We define the payoff for any joint strategy where some player outputs Reject in the second stage to be 0. 

\COMMENT{
\begin{center}
	\begin{tabular}{|r|c|c|}
		\hline 
		  & ($S_2$, Accept) & ($S_2$, Reject) \\
		\hline 
		($S_1$, Accept) & M & [(0,0)] \\ 
		\hline
		($S_1$, Reject) & [(0,0)] & [(0,0)]	\\
		\hline	
	\end{tabular}
\end{center}
where $[(0,0)]$ is the bimatrix of all entries being $(0,0)$.
}
\COMMENT{
\begin{figure*}[h]\label{Fig}
\centering
\includegraphics[width=1.0\textwidth]{figure1.pdf}
\caption{Extended Game}
\end{figure*}
}



\begin{center}
\fbox{
\begin{minipage}[l1pt]{6.0in}
{\bf  Extended Game $Q$} 

\begin{enumerate}
	\item 
	\textbf{Communication Stage}: The two players perform the Correlated Strategy Sampling procedure for the game $G$ and correlated equilibrium $p$. 
\item 
\textbf{Game Stage}: The two players play the original game $G$.
\item
\textbf{Checking Stage}: The two players {each} play a move from the set $\{A, R\}$.
\end{enumerate}
\end{minipage}
}
\end{center}
We can now restate and prove our main theorem.

\setcounter{Thm}{0}
\begin{Thm}
For any correlated equilibrium $p$ of the game $G$ of size $n$, and for any $\varepsilon, \delta >0$, there exists an extended game $Q$ with an $\varepsilon$-Nash equilibrium $\sigma$ that can be computed in time $poly(n,1/\delta, 1/\varepsilon)$ and such that the {expected} payoff for both players in $\sigma$ is at least as high as the one in $p$ minus $\delta$. 
\end{Thm}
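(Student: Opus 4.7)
The plan is to define the honest strategy profile $\sigma$ as follows. In the Communication Stage each player runs the Correlated Strategy Sampling (CSS) protocol honestly and obtains a joint strategy $(s_1,s_2)$. In the Game Stage Player $i$ plays $s_i$. In the Checking Stage Player $i$ plays $A$ iff the other player's Game Stage move agrees with the $s_{-i}$ she received from CSS, and $R$ otherwise. By property 1 of CSS, when both players follow $\sigma$ they agree on the same $(s_1,s_2)$ drawn from a distribution $p_h$ with $\av_{p_h}[u_i] \geq \av_p[u_i] - \delta'$, so both play Accept and each obtains expected payoff at least $\av_p[u_i] - \delta'$.

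To verify the $\varepsilon$-Nash condition, consider any deviation by Player $1$; the other case is symmetric. Playing $R$ in the Checking Stage only hurts, since the game is $[0,1]$-normalized and Reject forces payoff $0$, so we may assume Player $1$ plays $A$. Any Game Stage move $s_1^*$ different from the $s_1$ that honest Player $2$ holds triggers a rejection by Player $2$ and yields $0$; hence Player $1$'s expected payoff is upper bounded by $\av_q[u_1(s_1,s_2)]$ (using $u_1 \geq 0$ to absorb the event that $s_1^*$ matches Player $2$'s $s_1$), where $q$ is the distribution of Player $2$'s CSS output under Player $1$'s deviation. By property 2 of CSS this is at most $\av_{p_h}[u_1] + \varepsilon'$, so Player $1$ gains at most $\varepsilon'$ over the honest payoff.

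Choosing the CSS parameters $\varepsilon' = \varepsilon$ and $\delta' = \delta$ then delivers both the $\varepsilon$-Nash condition and the claimed payoff bound. Efficiency of $\sigma$ reduces to that of CSS, which by the construction in the next section uses $O(\log n)$ invocations of a weak coin flipping protocol with bias $O(\varepsilon/\log n)$ and is polynomial in $n$, $1/\delta$ and $1/\varepsilon$. The genuinely hard step is not this theorem, which is a clean reduction, but the CSS construction itself: without the privacy of the DHR protocol we cannot rely on secure multiparty computation, and the challenge will be to show that Mochon's quantum weak coin flip alone suffices to jointly sample from $p$ while satisfying both asymmetric security conditions at only $O(\varepsilon/\log n)$ bias per coin.
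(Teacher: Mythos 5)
Your proposal is correct and follows essentially the same route as the paper's own proof: the same honest strategy profile (be honest in CSS, play the sampled move, Accept iff the opponent's move matches), the same use of property 2 of CSS to bound the gain from deviating in the Communication Stage by $\varepsilon$, and the same observation that any Game Stage deviation is punished with payoff $0$ via the Checking Stage. Your extra remarks (that Reject can only hurt under $[0,1]$-normalization, and that the matching event is absorbed by nonnegativity of $u_1$) merely make explicit steps the paper leaves implicit.
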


\begin{proof}
We describe Player $1$'s strategy in the $\varepsilon$-Nash equilibrium $\sigma$ as follows (Player 2's strategy is symmetric): In the Communication Stage, Player 1 is honest and {obtains an output} $(s_1,s_2)$. In the Game Stage, he plays the move $s_1$. In the Checking Stage, he plays $A$ if Player 2's move {in the Game Stage was $s_2$} and $R$ otherwise.

Let us show that this is indeed an {$\varepsilon$-}Nash equilibrium. A dishonest player (assume Player 1) can try to increase his payoff by first deviating from the protocol in the Communication Stage. If Player 2 outputs a joint strategy $(s_1,s_2)$ then we know from the security of the Correlated Strategy Sampling procedure that this is a sample from a distribution $q$ \st
\[\av_{(s_1,s_2)\leftarrow q} [u_1(s_1,s_2)] \leq \av_{(s_1,s_2)\leftarrow p_h} [u_1(s_1,s_2)] + \varepsilon\] 
Hence, {if Player 1 is dishonest during Stage 1 and then plays $s_1$ in Stage 2, then his gain is} at most $\varepsilon$. If he decides to change his move, then the honest player would play $R$ in Stage 3, so his payoff would be 0. Overall, no matter what strategy the dishonest player follows he cannot increase his payoff more than $\varepsilon$ from the honest strategy mentioned above, and hence this strategy is an $\varepsilon$-approximate Nash equilibrium. 
\end{proof}

Note that from the security of the Correlated Strategy Sampling procedure we also have 
\[\av_{(s_1,s_2)\leftarrow q} [u_2(s_1,s_2)] \geq \av_{(s_1,s_2)\leftarrow p_h} [u_2(s_1,s_2)] - \varepsilon\]
Hence, we have the following interesting corollary
\begin{Cor}
In the extended game $Q$, the expected payoff of the honest player will not decrease by more than $\varepsilon$, no matter how the dishonest player deviates, unless the dishonest player makes both players' payoff equal to 0.
\end{Cor}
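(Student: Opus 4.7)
The plan is to mirror the case analysis from the theorem's proof, but to track the honest player's payoff $u_2$ rather than the deviator's $u_1$, and to invoke the ``other half'' of Property~2 of Correlated Strategy Sampling (the lower bound on $u_2$) that the theorem's proof did not use.

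First I would fix an arbitrary deviation by Player~1 and let Player~2 play her $\sigma$-strategy: run the Communication Stage honestly with output $(s_1,s_2)$, play $s_2$ in the Game Stage, and play $A$ in the Checking Stage iff Player~1's Game Stage move equals $s_1$. Denote Player~1's Game Stage move by $t_1$ and his Checking Stage move by $b\in\{A,R\}$. If $t_1\neq s_1$, then Player~2 plays $R$; if $b=R$, then Player~1 himself rejects; in either case both payoffs are $0$. Hence the honest player's payoff can be nonzero only on the event $\mathcal{E}=\{t_1=s_1 \text{ and } b=A\}$, and on its complement the payoff is $(0,0)$ for both players, which is precisely the ``unless'' clause of the corollary.

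Next, conditional on $\mathcal{E}$, both players accept and Player~2's realized utility is $u_2(s_1,s_2)$ evaluated at her own output sample. By Property~2 of Correlated Strategy Sampling, Player~2's output is distributed according to some $q$ satisfying
\[\av_{(s_1,s_2)\leftarrow q}[u_2(s_1,s_2)] \geq \av_{(s_1,s_2)\leftarrow p_h}[u_2(s_1,s_2)] - \varepsilon.\]
In the case where Player~1 does not sabotage, so that $\Pr[\mathcal{E}]=1$, this display is exactly Player~2's expected payoff, and the corollary follows immediately.

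I do not anticipate any real obstacle --- the proof is essentially a one-line invocation of Property~2 once the Checking Stage is used to reduce the analysis to the event $\mathcal{E}$. The one subtlety worth flagging is that $\mathcal{E}$ and $(s_1,s_2)$ can be correlated through Player~1's deviation, so a fully formal reading of ``unless the dishonest player makes both payoffs equal to $0$'' should either restrict attention to deviations that do not sabotage (so that $\Pr[\mathcal{E}]=1$ and the CSS bound transfers directly) or else state the conclusion as a conditional expectation given $\mathcal{E}$; I would select whichever phrasing matches the authors' intended reading and note the other as an immediate consequence.
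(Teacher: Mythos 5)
Your proposal matches the paper's own argument: the corollary is derived directly from the second inequality in Property~2 of the Correlated Strategy Sampling definition (the lower bound on $\av_{(s_1,s_2)\leftarrow q}[u_2(s_1,s_2)]$), with the Checking Stage accounting for the ``unless both payoffs are $0$'' clause. Your additional remark about the possible correlation between the accept event and the sampled strategy is a point of care the paper glosses over, but the route is the same.
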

In other words, the honest strategy remains an equilibrium even if the objective of a player is not to maximize his own payoff but rather maximize the difference between the players' payoffs.

\section{The Correlated Strategy Sampling procedure}

Let us start by fixing some notation. {In a two-player game, let $p$ be an efficiently computable correlated equilibrium that the players know and aim to generate. A typical scenario is that $p$ is the lexicographically first correlated equilibrium that maximizes the total payoff.} Let $p$ be the distribution of the CE of the original game $G$ of size $n$. If the distribution is not uniform we can emulate it by a uniform distribution on a multiset of size $K={n/\delta}$ (we choose $K=2^k$) and the distance between the two distributions is an inverse polynomial of $n$.  Let $p_h$ be the distribution that arises when both players are honest and $q$ the distribution of the honest player's output when the other player is dishonest. All distributions are on $\{0,1\}^k$.

Let us also define the following distributions for all $m \in {\{0,1,...,k\}}$:
\begin{eqnarray*}
p_h^{c^1,\ldots,c^m}:\{ 0,1\}^{k-m}\rightarrow {\mbR} & \mbox{ s.t. } & p_h^{c^1,\ldots,c^m}(r^{m+1},\ldots, r^k) = \frac{p_h(c^1,\ldots,c^m,r^{m+1},\ldots,r^k)}{\sum_{r^{m+1},\ldots, r^k}p_h(c^1,\ldots,c^m,r^{m+1},\ldots,r^k)}\\
q^m:\{0,1\}^m\rightarrow {\mbR} & \mbox{ s.t. } & q^m(c^1,\ldots,c^m)= \sum_{r^{m+1},\ldots,r^k}  q(c^1,\ldots,c^m,r^{m+1},\ldots,r^k)\\
p_h^m:\{0,1\}^m\rightarrow {\mbR} & \mbox{ s.t. } & p_h^m(c^1,\ldots,c^m)= \sum_{r^{m+1},\ldots,r^k}  p_h(c^1,\ldots,c^m,r^{m+1},\ldots,r^k)
\end{eqnarray*}
Note that $q^0 = p_h^0 = 1$. Also, by $(s_1,s_2)\leftarrow p_h^{c^1,\ldots,c^m}$ we mean the distribution on joint strategies $(s_1,s_2)$ or equivalently $\ell$-bit strings that arises from $p_h$ conditionned on the first $m$ bits of $\ell$ being $c^1,\ldots,c^m$.
Let ${sign}(a)$ be the function which is 1 if $a\geq 0$ and $-1$ if $a<0$. 
The protocol appears in the following figure.

\begin{center}
\fbox{
\begin{minipage}[l1pt]{6.0in}
{\bf  $(\varepsilon,\delta)$-Correlated Strategy Sampling Protocol} \\

{\bf Input}: A game $G$ of size $n$ with an efficiently computable correlated equilibrium $p$.
\begin{enumerate}
	\item Each Player $i$ computes locally the equilibrium $p$ and emulates $p$ by a uniform distribution on a multiset of joint strategies $\{(s_1^\ell,s_2^\ell)\}_{\ell \in \{ 0,1\}^k}$, with $k=O(\log n)$.

	\item   \textbf{ for} $j = 1$ to $k$
	\begin{enumerate}
		\item Each Player $i$ computes and announces his preference \\
$a_i^j = 
sign \big(E_{(s_1,s_2)\leftarrow p_h^{c^1,\ldots,c^{j-1},0}}[u_i(s_1,s_2)] - E_{(s_1,s_2)\leftarrow p_h^{c^1,\ldots,c^{j-1},1}}[u_i(s_1,s_2)] \big)$. 

		\item \textbf{if} $a^j_1a^j_2 = -1$, 
			
		\quad Run $WCF(a^j_1,\varepsilon /{(2k)})$. Let the outcome of Player $i$ be $c_i^j \in \{0,1\}$.
		
		\textbf{else}
		
		\quad Set $c_1^j=c_2^j=a^j_1$, their commonly desirable value.		
	\end{enumerate}
	\item Each Player $i$ outputs $(s_1^\ell,s_2^\ell)$, with $\ell = c_i^1...c_i^{k}$
\end{enumerate}
\end{minipage}
}
\end{center}

\paragraph{Analysis}
First, if both players are honest then their expected utility is at least as high as in the original CE, up to an additive error {$\delta$} due to the precision of using $k$ bits to emulate $p$. 
If in all rounds they flip a fair coin then their expected utility is exactly the same as in $p$. If at some round they both agree on a preferred value then this increases both players expected utility.

We now prove that no dishonest player can increase his utility by much compared to his honest utility. Without loss of generality, we assume that a dishonest player will always announce his preference honestly and then try to win the weak coin flipping protocol (if their preferences differ), since this can only increase his utility.

Let us assume {without loss of generality} that Player 1 is dishonest {and Player 2 is honest}. We will prove that after round $m$, we have
\begin{Claim}\label{induction}
$||q^m - p_h^m||_1 \leq m \frac{\varepsilon}{k}.$
\end{Claim}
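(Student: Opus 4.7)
The plan is to prove the claim by induction on $m$. The base case $m=0$ is immediate, since $q^0 = p_h^0 = 1$.

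For the inductive step, I will use a chain-rule decomposition. Write $\mathbf{c} = (c^1,\ldots,c^{m-1})$, and let $P(\cdot \mid \mathbf{c})$ and $Q(\cdot \mid \mathbf{c})$ denote the conditional distributions of the $m$-th output bit $c_2^m$ under honest execution and under dishonest Player 1 respectively. Applying the standard triangle-inequality manipulation
\[
  |q^{m-1}(\mathbf{c})\,Q(c^m\mid\mathbf{c}) - p_h^{m-1}(\mathbf{c})\,P(c^m\mid\mathbf{c})| \leq q^{m-1}(\mathbf{c})|Q - P| + P|q^{m-1}(\mathbf{c}) - p_h^{m-1}(\mathbf{c})|
\]
and summing over $(\mathbf{c},c^m)$ yields
\[
  ||q^m - p_h^m||_1 \;\leq\; \sum_{\mathbf{c}} q^{m-1}(\mathbf{c})\, ||Q(\cdot\mid\mathbf{c}) - P(\cdot\mid\mathbf{c})||_1 \;+\; ||q^{m-1} - p_h^{m-1}||_1.
\]
By the inductive hypothesis the second summand is at most $(m-1)\varepsilon/k$, so it suffices to show that each per-prefix conditional $\ell_1$ distance is at most $\varepsilon/k$.

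The per-prefix bound leverages that the preferences $a_1^m, a_2^m$ are deterministic functions of $\mathbf{c}$, as they are defined by expectations over the fixed distribution $p_h$. There are two cases. If $a_1^m = a_2^m$, then in both executions the bit $c^m$ is set deterministically to the common value (the dishonest player announces preferences honestly, by the WLOG observation preceding the claim), so $Q(\cdot\mid\mathbf{c}) = P(\cdot\mid\mathbf{c})$ and the contribution is zero. If $a_1^m \neq a_2^m$, the protocol runs $WCF(a_1^m,\varepsilon/(2k))$: honest play makes $c^m$ uniform on $\{0,1\}$, while under an optimally cheating Player 1, Proposition \ref{Mochon} forces $\Pr[c_2^m = a_1^m] \leq 1/2 + \varepsilon/(2k)$, and since Player 1 is trying to win and honest play already achieves winning probability $1/2$, we also have $\Pr[c_2^m = a_1^m] \geq 1/2$. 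The two Bernoulli distributions therefore differ in $\ell_1$ by at most $2 \cdot \varepsilon/(2k) = \varepsilon/k$. Combining with $\sum_{\mathbf{c}} q^{m-1}(\mathbf{c}) = 1$ gives $||q^m - p_h^m||_1 \leq \varepsilon/k + (m-1)\varepsilon/k = m\varepsilon/k$, which closes the induction.

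The main subtlety I expect is justifying the lower bound $\Pr[c_2^m = a_1^m] \geq 1/2$. Proposition \ref{Mochon} only bounds the cheater's winning probability from above, so a pathological Player 1 strategy could in principle push $\Pr[c_2^m = a_1^m]$ close to $0$ and inflate the per-prefix $\ell_1$ distance toward $1$. Resolving this requires reading the WLOG assumption preceding the claim as restricting to an \emph{optimal} try-to-win strategy, which is by definition at least as good as honest play. A minor side check is that $q^{m-1}(\mathbf{c}) > 0$ implies $p_h^{m-1}(\mathbf{c}) > 0$, so that $P(\cdot\mid\mathbf{c})$ is well defined whenever it is needed; this holds because rounds with agreed preferences produce identical deterministic bits in both executions, so the two distributions share the same support.
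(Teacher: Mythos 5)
Your proof is correct and follows essentially the same route as the paper's: induction on $m$, a triangle-inequality/chain-rule decomposition of $\|q^{m}-p_h^{m}\|_1$ into the previous prefix distance plus a per-round conditional term, with the agree/disagree case split and the $WCF(a,\varepsilon/(2k))$ guarantee bounding the disagreement term by $\varepsilon/k$. The lower bound $\Pr[c^m=a_1^m]\geq 1/2$ that you flag is indeed needed and is left implicit in the paper's displayed inequality, which silently assumes the cheater's per-round winning probability lies in $[1/2,\,1/2+\varepsilon/(2k)]$; your reading of the ``try to win'' normalization preceding the claim is the intended one.
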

The proof is in the next subsection. By the Claim, after the $k$-th round we have
\[ ||q^k-p_h^k||_1 = ||q-p_h||_1 \leq \varepsilon
\]
This means, that the expected utility of the dishonest player over the distribution $q$ is 
\[\av_{(s_1,s_2)\leftarrow q} [u_1(s_1,s_2)] \leq \av_{(s_1,s_2)\leftarrow p_h} [u_1(s_1,s_2)] + \varepsilon\] 
since all the utilities are normalized. Moreover, for the honest player we have
\[\av_{(s_1,s_2)\leftarrow q} [u_2(s_1,s_2)] \geq \av_{(s_1,s_2)\leftarrow p_h} [u_2(s_1,s_2)] - \varepsilon\] 

The same analysis holds when Player 2 is dishonest. Also, {it is easy to see that} the complexity of the protocol is polynomial in $n{/\delta}$ and $1/\varepsilon$. {This completes the proof of our main theorem.}

\medskip A final remark is that the same protocol can be used for general $k$-player games. In each round, some players prefer $c^m$ to be 0 and some players prefer 1. We can then let two representatives, one from each group, to do the weak coin flipping, at the end of which the representatives announce the bits. If one representative lies, then the other reject in the third stage. The previous analysis then easily applies to this scenario as well.

\subsection{Proof of Claim \ref{induction}}

{The base case of $m=0$ is trivial because $q^0 = p_h^0 = 1$}.

{Let us assume now that after round $m$, we have proved the above inequality, then for the $(m+1)-th$ round, there are two cases. First, the players agree on the bit, then the distance of the distributions $\|q^{m+1} - p_h^{m+1}\|_1$ remains the same as $\|q^m - p_h^m\|_1$. The second case is that the two players have different preferences for $c^{m+1}$. Assume without loss of generality that Player 1 prefers 0 and Player 2 prefers 1. We have}
\begin{eqnarray*}
& &\lefteqn{||q^{m+1}-p^{m+1}||_1}\\
 & = & \sum_{c^1,\ldots,c^{m+1}}\Big|  q^{m+1}(c^1,\ldots,c^{m+1}) - p_h^{m+1}(c^1,\ldots,c^{m+1}) \Big| \\
& = & \sum_{c^1,\ldots,c^{m+1}}  \Big| \sum_{r^{m+2},\ldots,r^k}  
\big[ q(c^1,\ldots,c^{m+1},r^{m+2},\ldots,r^k) - p_h(c^1,\ldots,c^{m+1},r^{m+2},\ldots,r^k)\big]  \Big|  \\
& = & \sum_{c^1,\ldots, c^m}  \Big( \big| \sum_{r^{m+2},\ldots,r^k}  
\big[ q(c^1,\ldots,c^{m},0,r^{m+2},\ldots,r^k) - p_h(c^1,\ldots,c^{m},0,r^{m+2},\ldots,r^k)\big]  \big| \\
& + &  \big| \sum_{r^{m+2},\ldots,r^k}  
\big[ q(c^1,\ldots,c^{m},1,r^{m+2},\ldots,r^k) - p_h(c^1,\ldots,c^{m},1,r^{m+2},\ldots,r^k)\big]  \big| \Big)\\
& \leq & \sum_{c^1,\ldots, c^m} \Big( \big|   \sum_{r^{m+1},\ldots,r^k}  
\big[ (\frac{1}{2}+\frac{\varepsilon}{{2}k}) q(c^1,\ldots,c^{m},r^{m+1},r^{m+2},\ldots,r^k) - \frac{1}{2} p_h(c^1,\ldots,c^{m},r^{m+1},r^{m+2},\ldots,r^k)\big]  \big|  \\
& + & \big|   \sum_{r^{m+1},\ldots,r^k}  
\big[ (\frac{1}{2}-\frac{\varepsilon}{{2}k}) q(c^1,\ldots,c^{m},r^{m+1},r^{m+2},\ldots,r^k) - \frac{1}{2} p_h(c^1,\ldots,c^{m},r^{m+1},r^{m+2},\ldots,r^k)\big]  \big| \Big) \\
& \leq & ||q^m-p_h^m||_1 + \frac{\varepsilon}{k}\cdot 1 \\
& \leq & (m+1)\frac{\varepsilon}{k}
\end{eqnarray*}

In the above inequalities we used the fact that the bias of the weak coin flipping protocol is always less than $\varepsilon/2k$ no matter what the values of the previous coins are. In fact, the security proof for the bias holds against a dishonest player that may possess a quantum auxiliary input, hence includes the situation where the dishonest player may try to entangle the different executions of the coin flips.

\subsection*{Acknowledgments}
Most of the work was done when the authors visited Centre of Quantum Technologies (CQT), Singapore in early January, 2011, under the support of CQT.

I.K.'s research was also supported by French projects ANR-09-JCJC-0067-01, ANR-08-EMER-012 and the project QCS (grant 255961) of the E.U. \hspace{1em} S.Z.'s research was supported by China Basic Research Grant 2011CBA00300 (sub-project 2011CBA00301), Hong Kong General Research Fund 419309 and 418710, and benefited from research trips under the support of China Basic Research Grant 2007CB807900 (sub-project 2007CB807901).

\bibliography{CE-QCF}
\bibliographystyle{plain}

\end{document}